\newtheorem{prop}{Proposition}
\newtheorem{theo}{Theorem}
\newtheorem{coro}{Corollary}
\newcommand{\rev}[1]{{\color{black}#1}} 
\newcommand{\cw}[1]{{\color{black}#1}} 
\newcommand{\com}[1]{{\color{red}\textbf{Comment}:#1}}
\newcommand{\rev}[1]{#1}
\newcommand{\com}[1]{}
\newcommand{\comment}[1]{}
\title{\LARGE \bf
Decision-Making on Timing and Route Selection:\\ A Game-Theoretic Approach
}
\author{Chenlan Wang$^{1}$
, Mingyan Liu$^{1}$
\thanks{$^{1}$ University of Michigan, Ann Arbor, USA. 
}
}
\begin{document}

\maketitle
\thispagestyle{empty}
\pagestyle{empty}

\begin{abstract}
We present a Stackelberg game model to investigate how individuals make their decisions on timing and route selection. Group formation can naturally result from these decisions, but only when individuals arrive at the same time and choose the same route. Although motivated by bird migration, our model applies to scenarios such as traffic planning, disaster evacuation, and other animal movements. 
Early arrivals secure better territories, while traveling together enhances navigation accuracy, foraging efficiency, and energy efficiency. Longer or more difficult migration routes reduce predation risks but increase travel costs, such as higher elevations and scarce food resources. Our analysis reveals a richer set of subgame perfect equilibria (SPEs) and heightened competition, compared to earlier models focused only on timing. By incorporating individual differences in travel costs, our model introduces a ``neutrality" state in addition to ``cooperation" and ``competition." 
\end{abstract}
%
\section{Introduction}

Groups form to provide collective benefits to their members, such as increased safety, shared resources, improved navigation accuracy, and greater energy efficiency\cite{sandler1980clubs, backstrom2006group, beauchamp2013social, beauchamp2021flocking}. Group formation is commonly observed not only in human society but also in nature. For example, people carpool to reduce travel costs, birds form flocks during migration to improve energy conservation and foraging efficiency \cite{mirzaeinia2019energy, beauchamp2011long, beauchamp2021flocking}, and cyclists ride in groups to lower energy expenditure. However, being part of a group also comes with costs, such as resource competition\cite{barker2012within}, coordination and communication efforts, and internal conflict. Resource competition arises because members must share collective resources according to a set of rules, which may not benefit everyone equally. \rev{In bird migration, while individuals in groups benefit from enhanced navigation accuracy and foraging/energy efficiency, they also compete for better territories on the breeding ground upon reaching their destination \cite{janiszewski2014timing}.} 

Formation of groups has been extensively studied across various disciplines, including economics, computer science, social science, political science, and biology \cite{couzin2006behavioral, goette2006impact, hollard2000existence, milchtaich2002stability}. Researchers have employed different approaches to examine how stable groups form, such as game theory \cite{banerjee2001core, bogomolnaia2002stability, slikker2001coalition, wang2024structural, kokko1999competition, janiszewski2014timing, wang2023cooperation, wang2024stackelberg}, clustering methods \cite{van2013community, wittemyer2005socioecology}, and agent-based modeling \cite{collins2017agent, collins2018strategic}. Classical game theoretical models capture the strategic decision-making of individuals, and could be used to understand interactions, predict outcomes, and guide the design of systems in competitive and cooperative settings. Most classical game-theoretic models found in this literature are one-shot games; relatively few are sequential games. 

Sequential game models bring their unique value for studying the timing of individual decisions in this context as they capture the dynamic nature of decision-making over time, unlike in a one-shot, simultaneous-move game. 
Our previous work \cite{wang2023cooperation} introduces \rev{a sequential game model that captures both competition and cooperation among individuals seeking to form groups to minimize traveling costs and predation risk.
Specifically, under this model, stronger agents make their decisions first, followed by weaker agents; each selects an arrival time at the destination, and they also compete for better territories upon arriving at the destination. With such a model we showed that when territorial differences are small, agents tend to form groups by choosing the same arrival time; otherwise, they compete by arriving as early as possible.
This is extended in \cite{wang2024stackelberg} by relaxing the definition of flocking to include all individuals arriving within a time window as part of the same group. This extension reveals an even richer set of flocking patterns at equilibrium.
The above models build upon Kokko's sequential but competition-only model \cite{kokko1999competition}, where the benefits of forming groups is not considered; as a result, at equilibrium, no group forms -- the strongest agent arrives first, followed by the next strongest, and so on, with the weakest arriving last. 
}

In this paper, we further extend the model in \cite{wang2023cooperation} by allowing agents to make a {\em route selection} decision in addition to timing. \rev{This extension is inspired by the commonly observed migration behavior, where migrants not only decide on timing to minimize costs such as those associated with seasonal climate but also choose a route, balancing the tradeoff between predation risk and challenges like high elevation and scarce food resources (easier routes are often prone to more predators)} \cite{alerstam2001detours, newton2023migration, egevang2010tracking}. To gain initial insight, we will limit ourselves to the simple setting of two agents, providing a framework to explore the decision-making differences of two types of individuals in migration. 

The remainder of this paper is organized as follows. Section~\ref{sec:flock formation game} introduces the Stackelberg game model. Section~\ref{sec:equilibrium} provides a detailed derivation of the Subgame Perfect Equilibria (SPEs), starting with the case of two routes, followed by the case of $n$ routes. 
Section~\ref{sec:discussion} compares the present model with the earlier game model in \cite{wang2023cooperation}. We also discuss how individual differences in travel costs affect route optimization. 
Section~\ref{sec:conclusion} concludes the paper.

\section{Game model and Preliminaries}
\label{sec:flock formation game}

Our game model follows closely the framework introduced in \cite{wang2023cooperation}, with the key difference that an agent makes an additional decision on a preferred route. \rev{As mentioned earlier, we will limit ourselves to the case of two agents.}

Consider two agents, denoted as $\mathcal{N}=\{1,2\}$, who must travel to a destination via one of $n$ available routes, $\mathcal{X}=\{1,2, ...,n\}$, to compete for territories at the destination, such as food resources and security. \rev{Traveling together (meaning to take the same route and to arrive at the same time)} provides several benefits, such as enhanced energy efficiency, increased safety, and improved navigation. On the other hand, arriving earlier can secure a better territory for the weaker agent -- if both arrive at the destination at the same time, the stronger agent will claim the better territory/resource.
Additionally, as the route becomes more challenging, travel cost rises, but predation risk decreases. Our main objective is to identify the conditions under which the two agents, despite their competitive goals, choose the same route and arrival time to travel together.

Each agent has a positive attribute called {\em strength}, denoted by $\beta_i >0$, $i\in\mathcal{N}$, which represents the agent's inherent qualities (e.g., navigation ability, survival skills, foraging ability, etc.). Similarly, each route is characterized by an attribute called {\em difficulty}, denoted by $\delta_k \geq 1$, $\cw{k} \in\mathcal{X}$, which reflects the challenges associated with the route. Each territory, or resource at the destination, is also associated with a positive value representing its quality, denoted by $E_j>0$, $j\in\mathcal{N}$ \rev{-- since each agent can only occupy a single territory, we will assume the same number of territories as there are agents.}
Without loss of generality, we will assume that Agent 1 is stronger than Agent 2: $\beta_1 > \beta_2$; routes are indexed in ascending order of their difficulties, route 1 being the easiest: 
$1 \leq \delta_1 < \delta_2< \cdots < \delta_n$; and that territory 1 is better than territory 2: $E_1 > E_2 > 0$. A direct consequence of the strength attribute is that the same journey with the same route selection is less costly for the stronger agent than it is for the weaker one. 

Agent $i$'s decision consists of its time of arrival at the destination and the selection of the route, denoted by the couple ($t_i, x_i$), \rev{where $t_i \in \mathcal{R}$ and $x_i\in{\cal{X}}$.} 
The joint action profile $(\mathbf{t, x}) = (t_1, x_1, t_2, x_2)$ or alternatively, $(\mathbf{t, x}) = (t_i, x_i, {t}_{-i}, {x}_{-i})$, following the notational convention. \rev{In this study, the term ``time'' in general refers to arrival time. If all routes take equal amount of time to travel, then arrival time is equivalent to departure time in terms of the agent's decision. If some routes take longer to travel, then a decision on arrival time implies the corresponding departure time.}

\subsection{Assumptions} 

\noindent (1) Full residency effect: The agent that arrives first and occupies a territory/resource at the destination will secure it. In other words, territories are claimed on a first-come, first-served basis.

\noindent (2) Tie-breaking: If both agents arrive simultaneously, the stronger agent will secure the better territory. 

\noindent (3) Sequential decision-making: Agent 1, the stronger agent and leader, first chooses its arrival time $t_1$ and route $x_1$, which is then announced. Following this, Agent 2, the follower, decides on its arrival time $t_2$ and route $x_2$. It is assumed that both agents commit to their decisions, meaning that Agent $i$ will travel along the selected route $x_i$ and arrive at the specified time $t_i$.
 
\subsection{The utility function} 
 The utility of agent $i$ is given by the following: 
\begin{equation}\label{eqn:u1} 
    u_i(\mathbf{t}, \mathbf{x}) = e_i(\mathbf{t}) - c_i(t_i, x_i) - p_i(\mathbf{t}, \mathbf{x})~, 
\end{equation}
where $e_i(\mathbf{t})$ represents the reward from the territory acquired by agent $i$, $c_i(t_i, \cw{x_i})$ denotes the travel cost, which depends solely on the agent's selected time and route, and $p_i(\mathbf{t})$ represents an additional cost determined by the chosen times and routes of all agents. This last term will also be referred to as the risk term. Each term is explained in detail below.

\paragraph{Benefit} if $t_i < t_{-i}$, $e_i(\mathbf{t}) = E_1$; if $t_i > t_{-i}$, $e_i(\mathbf{t}) = E_2$. When $t_i = t_{-i}$, the tie-breaking rule in the assumptions assigns $e_i(\mathbf{t}) = E_1$ to Agent 1 and $e_i(\mathbf{t}) = E_2$ to Agent 2. This is the only term that remains the same as in \cite{wang2023cooperation}.

\paragraph{Travel cost} $c_i(t_i, x_i) = \frac{1}{\beta_i}(t_i - t_o)^2+ c_o^i\delta(x_i) $, where $t_o$ is the optimal arrival time at which the travel cost is minimized to a marginal cost $c_o^i\delta(x_i)$. Deviations in either direction from $t_o$ increase the travel cost, and weaker agents (with smaller $\beta$) are more sensitive to this deviation from optimality. The term \rev{$\delta(x_i)=\delta_{x_i}$} represents the difficulty of the route chosen by the agent, 
where the more difficult the route, the higher the cost associated with traveling. For analytical convenience, we define $\delta_k$ \rev{as a sequence of increasing constants $\delta_k= \lambda_k\delta_1$, where $\lambda_1=1, \lambda_{k+1}>\lambda_k, k\geq 1$}. 
This model captures the travel cost due to external factors such as climate; e.g., in the case of bird migration during spring,  traveling in either excessively cold or hot weather can be detrimental. It also accounts for different routes, where routes with less exposure to predators may also expose migrants to extreme conditions, such as high elevation and scarce food. The fixed cost ($c_o^i>0$) is agent-dependent and generally lower for stronger agents (i.e., $c_o^1 < c_o^2$). 
In section~\ref{sec:equilibrium}, we will primarily focus on the impact of route selection on marginal travel cost rather than individual differences, and thus will assume a uniform marginal cost for each individual, i.e., $c_o^1 = c_o^2 = c_o $. We discuss the general case ($c_o^1 < c_o^2$) in Section~\ref{sec:discussion}.

We note that in \cite{wang2023cooperation}, since the agent's decision-making depends entirely on the optimal $t_o$, the marginal cost does not affect the analysis and is therefore set to zero ($c_o^i = 0$). However, in the current model, even if an agent selects the optimal travel time $t_o$ to minimize timing-related costs, a more challenging route will incur a higher travel cost. As a result, the term $c_o^i$ cannot be ignored.

 \paragraph{Risk} $p_i(\mathbf{t}, \mathbf{x}) = \frac{r}{|\{j:(t_j=t_i) \wedge (\delta(x_j) = \delta(x_i))\}|\delta(x_i)}$, where $|{j : (t_j = t_i) \wedge (\delta(x_j) = \delta(x_i))}|$ represents the total number of agents traveling on route $x_i$ and arriving at time $t_i$, including agent $i$ itself. Here, $r$ is the nominal risk for an individual agent. In this 2-agent scenario, the risk definition implies that if the two agents choose the same route and time of arrival, they are considered as traveling together, and the risk for each is reduced by half. Further, selecting a more difficult route also reduces this risk; e.g., predators are less likely to hunt along more challenging routes. This distinction is a key difference between the present model and our prior work \cite{wang2023cooperation}, which did not consider route selection.

 The solution concept we will use is the subgame perfect equilibrium (SPE) -- a strategy profile in which agents' strategies form a Nash equilibrium in every subgame of the original game -- with the specified ordering of decision making by the two agents; \rev{this will generally by denoted by $(\mathbf{t}^{*}, \mathbf{x}^{*})$.} In this analysis, agents face the challenge of selecting the optimal route and arrival time, while considering both individual risks and the strategic advantages of cooperating with another. By examining the interactions between these factors, we aim to gain a deeper understanding of how agents navigate complex decision-making scenarios, balancing competition and collaboration, and optimizing their choices in environments with limited resources.

\section{Equilibrium Analysis}\label{sec:equilibrium}
In this section we first study the SPE of the 2-agent 2-route game, and then generalize the result to the 2-agent $n$-route game. \cw{ For analytical tractability, we assume that both agents use the same deterministic rule to break ties whenever multiple routes are equally optimal.}

\subsection{The 2-agent 2-route game}
\rev{With only two routes, the route difficulty is given by $\delta_1$ and $\delta_2 = \lambda \delta_1$, $\lambda > 1$, respectively.}

\begin{prop}\label{prop:t0}
No agent arrives later than $t_o$ in an SPE. 
\end{prop}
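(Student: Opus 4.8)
The plan is to prove this by contradiction, showing that any candidate SPE in which some agent arrives strictly after $t_o$ can be improved upon by a unilateral deviation to an earlier time, thereby violating the equilibrium property. First I would observe that the travel-cost term $\frac{1}{\beta_i}(t_i - t_o)^2 + c_o\delta(x_i)$ is strictly convex in $t_i$ and uniquely minimized at $t_i = t_o$, so for any arrival time $t_i > t_o$ there is a strictly earlier time with lower travel cost. The subtlety is that changing $t_i$ also affects the benefit term $e_i(\mathbf{t})$ (who arrives first) and the risk term $p_i(\mathbf{t},\mathbf{x})$ (whether the agents travel together), so a naive decrease in $t_i$ is not automatically utility-improving; the argument must track all three terms simultaneously.

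I would handle the two agents separately, starting with the follower (Agent 2), since its best response is computed against a fixed action $(t_1, x_1)$ of the leader. Suppose toward contradiction that Agent 2's equilibrium arrival time satisfies $t_2^{*} > t_o$. I would consider decreasing $t_2$ toward $t_o$ while keeping the route $x_2$ fixed, and examine how the benefit and risk change. The key point is that moving $t_2$ downward can only \emph{weakly improve} Agent 2's position in the arrival ordering: an agent arriving earlier never does worse on the benefit term $e_2$, and it can only help (or leave unchanged) the chance of coinciding with Agent 1 to share risk. More concretely, I would split into cases according to the relative position of $t_1$ and $t_2^{*}$: if $t_2^{*} > t_1$, Agent 2 is already the later arrival receiving $E_2$, so sliding $t_2$ down to $\max(t_o, t_1)$ strictly reduces travel cost without worsening the benefit and without increasing risk; if $t_2^{*} = t_1 > t_o$, then both agents are arriving late together, and I would instead show that the leader would not have chosen such a late $t_1$ in the first place, deferring to the Agent 1 case. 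Either way the follower obtains a strictly higher utility by arriving no later than $t_o$, contradicting optimality.

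For the leader (Agent 1), the argument is similar but must account for the fact that Agent 1 anticipates Agent 2's best response. Assuming $t_1^{*} > t_o$, I would argue that Agent 1 can move to $t_o$ (keeping $x_1$ fixed) and that, under Agent 2's resulting best response, Agent 1's benefit does not decrease while its travel cost strictly decreases. The cleanest route is to first establish, using the follower analysis above, that the follower always responds with $t_2 \le t_o$; this constrains the set of arrival orderings the leader must contemplate when it deviates to $t_o$, and in particular prevents the pathological case where moving earlier would surrender the better territory. Combining the two cases yields the claim for both agents.

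The hard part will be the leader's deviation, specifically ruling out the scenario in which arriving strictly later than $t_o$ somehow secures a risk-sharing or territorial advantage that an on-time arrival forfeits. I expect the resolution to rest on two structural facts: (i) arriving earlier is weakly dominant for the benefit term because of the full-residency and tie-breaking assumptions, and (ii) the risk term depends only on \emph{coincidence} of times and routes, not on lateness per se, so any risk-sharing benefit available at a late coincidence point is also available (after the follower's best response) at an earlier one with strictly lower travel cost. Making the second point precise — showing the follower does not strand the leader by refusing to coincide once the leader moves to $t_o$ — is where the best-response reasoning must be invoked carefully rather than waved through.
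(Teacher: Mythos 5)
Your overall strategy -- contradiction, unilateral deviation to an earlier time, and a case split on who is arriving late and whether the agents coincide -- is the same as the paper's. The one substantive divergence is that you deviate the late agent all the way to $t_o$ in a single discrete jump, whereas the paper perturbs the arrival time by an infinitesimal $\epsilon$. That difference is exactly what creates the difficulty you flag at the end: if both agents coincide at $t>t_o$ and the leader jumps to $t_o$, the leader loses $E_1-E_2$ in benefit and $\frac{r}{2\delta(x_1)}$ in risk-sharing should the follower refuse to coincide, and the travel-cost saving $\frac{(t-t_o)^2}{\beta_1}$ can be arbitrarily small, so the jump is not automatically improving. The paper's $\epsilon$-deviation sidesteps this: an infinitesimal advance strictly improves the follower's ``coincide'' payoff while leaving its outside options essentially unchanged, so the follower's best response cannot switch regimes and the leader's utility strictly increases.

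Your version can be closed, but the missing piece is a specific observation you only gesture at: when the leader sits at $t_1>t_o$, the follower's best preemption option is to arrive at exactly $t_o$ and take $E_1$; when the leader moves to $t_o$, preemption requires arriving strictly before $t_o$, which is strictly worse for the follower. So the follower's outside option weakly deteriorates while its coincide option strictly improves, hence a follower who coincided before still coincides after, and the leader's jump is then unambiguously profitable. Without that argument (or the paper's $\epsilon$-perturbation in its place), the leader case is not actually established. A minor additional omission: in your follower analysis you treat $t_2^*>t_1$ and $t_2^*=t_1$ but not $t_o<t_2^*<t_1$; that case is immediate (slide to $t_o$, keep $E_1$, stay alone, save travel cost) but should be stated.
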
 

\noindent{\em Proof Sketch:} 
This can be shown by contradiction. Consider the later arrival time between the two agents, and denote this by $t>t_o$.  
\begin{itemize} 
\item If agent 2 arrives alone at $t$, it can advance its arrival time by an infinitesimal amount $\epsilon$ to lower its travel cost while keeping the same route, thereby improving its utility. Thus, $t$ cannot be part of an SPE. 
\item If agent 1 arrives alone at $t$ (meaning agent 2 chose an earlier time), advancing this by $\epsilon$ either has no impact on agent 2’s decision or causes agent 2 to delay its arrival and join agent 1 to form a group on the same route. In either case, agent 1's utility improves, meaning $t$ is not its best response, and cannot be part of an SPE. 
\item If both agents arrive at $t$, then agent 1 can advance its arrival by $\epsilon$, causing agent 2 to do the same as it lowers its travel cost, which in turn improves agent 1's utility. Thus $t$ is not an optimal choice for agent 1 and cannot be part of an SPE. \end{itemize}

Proposition~\ref{prop:t0} is similar to Proposition 1 in \cite{wang2023cooperation} because while route selection may affect departure time, it does not impact arrival decisions.

\begin{prop}\label{prop:x1=x2, t1=t2}
In any SPE where $t_1^* = t_2^*$, it must follow that $x_1^* = x_2^*$. In other words, whenever agents do not compete for resources in equilibrium, they will also choose to travel together on the same route.
\end{prop}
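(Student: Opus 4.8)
\emph{Proof proposal.} The plan is to argue by contradiction: suppose there is an SPE with $t_1^* = t_2^* = t^*$ but $x_1^* \neq x_2^*$, and exhibit a profitable deviation for the leader, Agent 1. Since both agents arrive at $t^*$, the tie-breaking rule fixes the territory rewards at $E_1$ for Agent 1 and $E_2$ for Agent 2 regardless of routes, so only the cost-plus-risk terms matter. It is convenient to abbreviate, for a route of difficulty $\delta$, the ``travel-alone'' and ``travel-together'' cost-plus-risk values $A(\delta) = c_o\delta + \frac{r}{\delta}$ and $G(\delta) = c_o\delta + \frac{r}{2\delta}$, noting $G(\delta) < A(\delta)$ because grouping halves the risk. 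With different routes each agent is alone, so the equilibrium payoffs are $E_1 - \frac{1}{\beta_1}(t^*-t_o)^2 - A(\delta_{x_1^*})$ and $E_2 - \frac{1}{\beta_2}(t^*-t_o)^2 - A(\delta_{x_2^*})$.

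First I would extract a structural inequality from the follower's optimality. Because $(t^*, x_2^*)$ is Agent 2's best response to Agent 1's $(t^*, x_1^*)$, Agent 2 cannot profit by instead arriving at $t^*$ on route $x_1^*$, which would make it group with Agent 1; comparing these two options gives $A(\delta_{x_2^*}) \le G(\delta_{x_1^*})$. Combining this with $G < A$ yields the chain $G(\delta_{x_2^*}) < A(\delta_{x_2^*}) \le G(\delta_{x_1^*}) < A(\delta_{x_1^*})$, and in particular $G(\delta_{x_2^*}) < A(\delta_{x_1^*})$. Intuitively, the follower has already claimed the route that is cheaper to travel alone.

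Next I would consider the leader's deviation to $(t^*, x_2^*)$, i.e. Agent 1 keeps its arrival time but switches onto Agent 2's route, and determine Agent 2's best response to it. The crucial observation is that Agent 2's payoff from \emph{competing} (arriving strictly before $t^*$ to secure $E_1$) does not depend on Agent 1's route, hence is unchanged by the deviation; call its best value $C$, and note that the original optimality of $(t^*, x_2^*)$ gives $E_2 - \frac{1}{\beta_2}(t^*-t_o)^2 - A(\delta_{x_2^*}) \ge C$. After the deviation Agent 2 can group with Agent 1 on route $x_2^*$ for $E_2 - \frac{1}{\beta_2}(t^*-t_o)^2 - G(\delta_{x_2^*})$, which strictly exceeds its former payoff (since $G < A$) and therefore $C$; it also beats arriving alone at $t^*$ on route $x_1^*$ (as $G(\delta_{x_2^*}) < A(\delta_{x_1^*})$) and beats any later arrival by Proposition~\ref{prop:t0}. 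With only two routes these exhaust Agent 2's options, so grouping is its unique best response.

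Finally, under this induced response Agent 1 obtains $E_1 - \frac{1}{\beta_1}(t^*-t_o)^2 - G(\delta_{x_2^*})$, exceeding its equilibrium payoff $E_1 - \frac{1}{\beta_1}(t^*-t_o)^2 - A(\delta_{x_1^*})$ precisely because $G(\delta_{x_2^*}) < A(\delta_{x_1^*})$, contradicting SPE and forcing $x_1^* = x_2^*$. I expect the main obstacle to be the third step: rigorously ruling out that the leader's route switch makes competition (or a switch to the vacated route $x_1^*$) attractive again for the follower. The argument handles this by noting the competing payoff is invariant to the leader's route while the cooperative payoff strictly improves over the follower's already-optimal choice, but one must keep the enumeration of follower responses exhaustive, which is what confines the clean version to the two-route setting of this subsection.
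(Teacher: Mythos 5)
Your proposal is correct and follows essentially the same strategy as the paper's appendix proof: assume $t_1^*=t_2^*=t^*$ with $x_1^*\neq x_2^*$, extract from Agent 2's best response that its solo route beats grouping on Agent 1's route, and then show Agent 1 profits by switching onto Agent 2's route because Agent 2's unique best response becomes to group there. The paper carries this out as two explicit symmetric cases with parameter inequalities in $\lambda, c_o, \delta_1, r$, whereas your $A(\delta)$/$G(\delta)$ chain unifies the cases and additionally verifies that the follower's off-$t^*$ options are unaffected by the leader's route switch, which the paper leaves implicit.
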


\noindent{\em Proof Sketch:} 
\rev{This can also be done by contradiction. By choosing to arrive at the same time but on a different route, agent 2 will take the lower-quality territory without benefiting from flocking. If agent 2 chooses to do so because it results in a higher utility, then it can improve it even more by moving up its arrival by an infinitesimal amount $\epsilon$, which has virtually no impact on its cost of travel but allows it to take the better territory. This shows that $t_2^*=t_1^{*}, x_2^{*}\neq x_1^{*}$ is not a best response by agent 2.}

\begin{prop}\label{prop:x1=x2, t1 neq t2}
\rev{There exists no SPE where $t_1^* \neq t_2^*$ and $x_1^* \neq x_2^*$.}  
\end{prop}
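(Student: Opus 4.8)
The plan is to exploit the fact that in the configuration being ruled out neither agent ever flocks, so both pay the full solo risk and their only route-dependent term is $g(\delta) := c_o\delta + r/\delta$. First I would observe that this scalar function is minimized at a single route, call it $x^{\circ}$ (the strict minimizer, or the route selected by the shared deterministic tie-breaking rule in the degenerate case of equality), and that $x^{\circ}$ is the \emph{same} for both agents since $c_o$ and $r$ are common. Consequently a \emph{solo} agent -- one whose arrival time differs from the other's, so that flocking is impossible regardless of routes -- strictly prefers $x^{\circ}$ to any other route: changing only its route leaves its arrival time, hence its secured territory (which is fixed by the arrival order), and its flocking status untouched, while strictly lowering $g$.

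Assuming a counterexample SPE with $t_1^*\neq t_2^*$ and $x_1^*\neq x_2^*$, both agents are solo, and since the two routes differ, at least one agent is not on $x^{\circ}$. I would split on which agent this is. If it is the follower (agent~2), the contradiction is immediate: holding $t_2^*$ fixed and switching to $x^{\circ}$ preserves its arrival order, hence its territory, and keeps $t_2^*\neq t_1^*$, hence creates no flocking, yet strictly reduces $c_o\delta(x_2)+r/\delta(x_2)$. Thus $(t_2^*,x_2^*)$ was not a best response to agent~1's committed action.

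The substantive step is the case where the \emph{leader} (agent~1) is the one off $x^{\circ}$, since switching the leader's route may alter the follower's best response. Here I would argue that deviating to $(t_1^*, x^{\circ})$ can only help agent~1. The key point is that agent~2's incentive to grab the better territory versus conceding it is governed solely by its own solo grab/concede utilities on its solo-optimal route $x^{\circ}$, which are \emph{independent} of agent~1's route; the only new option created by the deviation is to flock with agent~1 at $t_1^*$ on $x^{\circ}$, and by the tie-breaking rule this hands the better territory to the stronger agent~1 while halving its risk. Enumerating agent~2's responses, agent~1's secured territory therefore weakly improves -- it keeps $E_1$ if it arrived first, and obtains at least $E_2$ (indeed $E_1$ should agent~2 now choose to flock) if it arrived later -- while its route-plus-risk cost strictly drops from a non-optimal value to $g(x^{\circ})$. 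Hence the deviation is strictly profitable, contradicting optimality of the leader's choice.

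The main obstacle I anticipate is precisely this leader case: one must verify that no response by agent~2 to the route switch can cost agent~1 the territory it held. This is resolved by the observation that agent~2's grab-versus-concede tradeoff depends only on its route-switch-invariant solo options, so the switch can only add a flocking possibility that favors agent~1. A secondary technicality is the non-attainment of the follower's ``arrive an instant earlier'' option; as in Proposition~\ref{prop:t0} and \cite{wang2023cooperation}, the argument compares the relevant utilities (or their suprema) directly and is therefore unaffected. Finally, the knife-edge $g(\delta_1)=g(\delta_2)$ is absorbed by the shared deterministic tie-breaking rule, under which two solo agents again select the common route $x^{\circ}$ and so cannot occupy different routes.
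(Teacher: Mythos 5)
Your proposal is correct and follows essentially the same route as the paper: since differing arrival times rule out flocking, each agent's route choice reduces to minimizing the common, strength-independent quantity $c_o\delta + r/\delta$, so under the shared tie-breaking rule both must land on the same route (the paper formalizes this by showing the deviation gains $\Delta u_1$ and $\Delta u_2$ are identical, which is your observation that the minimizer $x^{\circ}$ is common to both agents). Your extra care in the leader case---checking that agent 2's response to the route switch cannot cost agent 1 its territory---is a refinement the paper glosses over, but it does not change the substance of the argument.
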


\noindent{\em Proof Sketch:} 
\rev{Since neither agent benefits from traveling together, then given $t_1^* \neq t_2^*$ and the marginal travel cost $c_o$ is the same for both, their optimal route choice is completely determined by the route difficulty and is independent of the individual strength. Thus the optimal route for one agent must also be optimal for the other, meaning we must have $x^*_1=x^*_2$ uder the same tie-breaking rule.}

Proposition~\ref{prop:x1=x2, t1 neq t2} implies 
when agents compete for resources in an SPE ($t_1^* \neq t_2^*$), they will choose the same route but not travel together.
The following corollary follows directly from Proposition~\ref{prop:x1=x2, t1=t2} and Proposition~\ref{prop:x1=x2, t1 neq t2}.
\begin{coro}\label{coro:x1=x2 all}
Both agents will select the same route in any SPE: $x_1^* = x_2^*$.
\end{coro}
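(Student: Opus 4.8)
The plan is to prove the claim by an exhaustive case split on the relationship between the two equilibrium arrival times. I would fix an arbitrary SPE $(\mathbf{t}^*, \mathbf{x}^*)$ and observe that, since $t_1^*$ and $t_2^*$ are real numbers, exactly one of two mutually exclusive possibilities holds: either $t_1^* = t_2^*$ or $t_1^* \neq t_2^*$. The goal is then to establish $x_1^* = x_2^*$ in each case separately and conclude that it holds in every SPE.

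In the first case, $t_1^* = t_2^*$, the conclusion $x_1^* = x_2^*$ is precisely the statement of Proposition~\ref{prop:x1=x2, t1=t2}, so nothing further is required. In the second case, $t_1^* \neq t_2^*$, I would invoke Proposition~\ref{prop:x1=x2, t1 neq t2}, which asserts that no SPE can simultaneously satisfy $t_1^* \neq t_2^*$ and $x_1^* \neq x_2^*$. Since we are already assuming an SPE with $t_1^* \neq t_2^*$, the only remaining possibility is $x_1^* = x_2^*$. Because the two cases are jointly exhaustive and each yields $x_1^* = x_2^*$, the corollary follows for every SPE.

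There is essentially no obstacle at the level of the corollary itself: all of the substantive work has already been discharged in the two propositions -- the infinitesimal-deviation argument ruling out same-time/different-route profiles in Proposition~\ref{prop:x1=x2, t1=t2}, and the observation that a common marginal cost $c_o$ renders the route decision strength-independent in Proposition~\ref{prop:x1=x2, t1 neq t2}. The only remaining thing to verify is that the dichotomy $\{t_1^* = t_2^*,\, t_1^* \neq t_2^*\}$ partitions all SPEs, which is immediate. I would therefore expect the proof to consist of a single sentence per case together with the exhaustiveness remark.
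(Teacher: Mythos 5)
Your argument is correct and matches the paper exactly: the corollary is stated there as following directly from Proposition~\ref{prop:x1=x2, t1=t2} and Proposition~\ref{prop:x1=x2, t1 neq t2}, via precisely the exhaustive dichotomy on $t_1^* = t_2^*$ versus $t_1^* \neq t_2^*$ that you describe. Nothing further is needed.
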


\begin{prop}\label{coro:t1 < = t2 = t0}
The weaker agent will 
\rev{always arrive at $t_o$}
and the stronger agent will arrive no later than the weaker agent in any SPE: $t_1^* \leq t_2^* = t_o$.
\end{prop}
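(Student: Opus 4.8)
\noindent\emph{Proof plan.} The plan is to lean on the two structural facts already established and then argue by cases on the sign of $t_1^{*}-t_2^{*}$. By Corollary~\ref{coro:x1=x2 all} both agents share a route in any SPE, so I may write $\delta^{*}:=\delta(x_1^{*})=\delta(x_2^{*})$, and by Proposition~\ref{prop:t0} we have $t_1^{*},t_2^{*}\le t_o$. Since the route is common, each agent's utility reduces to $e_i-\tfrac{1}{\beta_i}(t_i-t_o)^2-c_o\delta^{*}-p_i$, in which only the benefit $e_i$, the timing penalty, and the (possibly halved) risk react to the arrival times. I would prove the two assertions in turn: first $t_1^{*}\le t_2^{*}$, then $t_2^{*}=t_o$.

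For $t_1^{*}\le t_2^{*}$, I would rule out $t_2^{*}<t_1^{*}$ by contradiction. If the follower arrives strictly first it secures $E_1$ but forgoes flocking, so on any fixed route its utility is $E_1-\tfrac{1}{\beta_2}(t_2-t_o)^2-c_o\delta^{*}-\tfrac{r}{\delta^{*}}$, which on the admissible range $t_2<t_1^{*}\le t_o$ is strictly increasing in $t_2$. Its supremum is therefore only approached as $t_2\uparrow t_1^{*}$ and is never attained (at the tie point $t_2=t_1^{*}$ the benefit drops to $E_2$ by the tie-breaking rule). Hence no specific time strictly below $t_1^{*}$ is a best response, and $t_2^{*}<t_1^{*}$ cannot arise in an SPE, leaving $t_1^{*}\le t_2^{*}$.

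For $t_2^{*}=t_o$, suppose instead $t_2^{*}<t_o$. If $t_1^{*}<t_2^{*}$ (strict competition), the follower already holds $E_2$ and does not flock, so shifting its arrival to $t_o$ keeps it later than the leader, preserves both its territory and its risk, and strictly lowers the timing penalty: a profitable deviation, contradicting best response. If instead $t_1^{*}=t_2^{*}<t_o$ (the two flock below $t_o$), I would invoke the leader's optimality. The follower's match-versus-undercut comparison, $E_2-\tfrac{r}{2\delta^{*}}$ against $E_1-\tfrac{r}{\delta^{*}}$, is independent of $t_1$ because the common timing penalty cancels; so flocking being the follower's best response forces $E_1-E_2\le\tfrac{r}{2\delta^{*}}$, and this very inequality guarantees that the follower would again match if the leader moved to $t_1=t_o$. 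The leader could therefore deviate to $t_o$, retain $E_1$ and the halved risk, and strictly cut its own timing penalty, contradicting SPE. Both subcases rule out $t_2^{*}<t_o$, giving $t_2^{*}=t_o$.

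The main obstacle is the flocking subcase $t_1^{*}=t_2^{*}<t_o$: unlike strict competition it cannot be dismissed by a unilateral move of the weaker agent, and the argument must instead exploit the leader's first-mover optimization together with the fact that the follower's incentive to flock rather than undercut does not depend on where the flock forms. A secondary delicacy is the open-set/tie-breaking issue behind the undercutting argument, where the relevant supremum is never attained; I would handle it by observing that the mere absence of a maximizer already precludes a best response, so no limiting convention is required.
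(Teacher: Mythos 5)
Your proof is correct and follows essentially the same case decomposition as the paper's: invoke Proposition~\ref{prop:t0} and the common-route results, rule out the follower arriving strictly first because its solo utility is strictly increasing up to the (unattained) tie point, rule out $t_1^*<t_2^*<t_o$ by the follower delaying to $t_o$, and rule out $t_1^*=t_2^*<t_o$ via a profitable deviation by the leader. Your treatment of that last subcase is in fact more explicit than the paper's (which delays the leader only infinitesimally and, in the appendix, calls the conclusion ``evident''): the observation that the follower's match-versus-undercut comparison is independent of $t_1$ is exactly the justification needed for the leader's jump to $t_o$, with the only minor imprecision being that the undercut alternative should use the follower's solo-optimal route, which need not be $x^*$ --- this does not affect the time-invariance argument.
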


\noindent{\em Proof Sketch:}
\rev{From Proposition \ref{prop:t0}, we know if arrival times are different from $t_o$, then it must be earlier than $t_o$. We proceed with contradiction again.\\
(1) Suppose $t_2^*<t_1^*\leq t_o$, then agent 2 can always delay its arrival by an infinitesimal amount $\epsilon$ without losing the better territory while lowering its travel cost. Therefore, agent 2 can never arrive before agent 1 in an SPE.\\
(2) Suppose $t_1^*<t_2^*<t_o$, then agent 2 is not benefiting from flocking, thus it can always delay its arrival to $t_o$ to minimize its travel cost. Therefore this cannot be an SPE.\\
(3) Suppose $t_1^* = t_2^* < t_o$. From Proposition~\ref{prop:x1=x2, t1=t2} we know $x_1^* = x_2^*$. This suggests that agent 2 finds flocking with agent 1 more beneficial than traveling alone (even earlier) and taking the better territory. It follows that agent 1 can delay its arrival by an infinitesimal amount $\epsilon$ without causing agent 2 to leave the flock -- its utility has improved, but too small to trigger a decision to travel alone. Agent 1's utility would also improve, therefore this cannot be an SPE.}

\begin{theo}\label{theo:all SPE}
The pure strategy SPEs for the 2-agent 2-route game has to take the follow form: $(\mathbf{t}^*, \mathbf{x}^*) = 
    (t_1^*, x^*, t_o, x^*)$, where $t_1^* \in \{T, t_o\}$, $x^* \in \{1,2 \}$, and $T = t_o - \sqrt{(E_1-E_2)\beta_2}$. They occur in the following cases:\\
(1) If $c_o \delta_1^2 < \frac{r}{2\lambda}$:
\begin{itemize}
    \item If $E_1-E_2 \leq \frac{r}{2\lambda \delta_1}$, then $(
     \mathbf{t}^*, \mathbf{x}^*) = 
    (t_o, 2, t_o, 2)$ and is unique;
    \item Otherwise, $(\mathbf{t}^*, \mathbf{x}^*) = 
    (T, 2, t_o, 2)$ is the unique SPE;
\end{itemize} 
(2) If $c_o \delta_1^2 = \frac{r}{2\lambda}$:
\begin{itemize}
    \item If $E_1-E_2 \leq \frac{r}{2\lambda \delta_1}$, then 
    $(t_o, 2, t_o, 2)$ and $(t_o, 1, t_o, 1)$ are both SPEs;
    \item Otherwise, 
    $(T, 2, t_o, 2)$ is the unique SPE;
\end{itemize}
(3) If $c_o \delta_1^2 \in (\frac{r}{2\lambda}, \frac{r}{\lambda})$:
\begin{itemize}
    \item If $E_1-E_2 \leq (\lambda-1)c_o\delta_1 - \frac{(\lambda - 2)r}{2 \lambda \delta_1}$, then 
    $(t_o, 1, t_o, 1)$ is the unique SPE;
    \item Otherwise,  
    $(T, 2, t_o, 2)$ is the unique SPE;
\end{itemize}
(4) If $c_o \delta_1^2 > \frac{r}{\lambda}$:
\begin{itemize}
    \item If $E_1-E_2 \leq \frac{r}{2 \delta_1}$, then $(t_o, 1, t_o, 1)$ is the unique SPE;
    \item Otherwise, $(T, 1, t_o, 1)$ is the unique SPE;
\end{itemize} 
(5) If $c_o \delta_1^2 = \frac{r}{\lambda}$:
\begin{itemize}
    \item If $E_1-E_2 \leq \frac{r}{2 \delta_1}$, then 
    $(t_o, 1, t_o, 1)$ is the unique SPE;
    \item Otherwise, $(T, 1, t_o, 1)$ and $(T, 2, t_o, 2)$ are both SPEs;
    \end{itemize} 

\end{theo}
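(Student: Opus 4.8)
The plan is to lean entirely on the structural results already established. By Corollary~\ref{coro:x1=x2 all} and Proposition~\ref{coro:t1 < = t2 = t0}, every pure-strategy SPE has the form $(t_1^*, x^*, t_o, x^*)$ with $t_1^* \le t_o$, so only two qualitative modes survive: a \emph{cooperative} mode $t_1^* = t_o$ (both arrive together and split the halved risk) and a \emph{competitive} mode $t_1^* < t_o$ (the leader arrives early to lock in $E_1$). The entire theorem then reduces to deciding, as a function of the parameters, which mode occurs and on which route. To organize this I would fix the per-route costs: the ``alone'' cost $A_k = c_o\delta_k + \frac{r}{\delta_k}$ and the ``flock'' cost $F_k = c_o\delta_k + \frac{r}{2\delta_k}$, with $A^* = \min_k A_k$ and $F^* = \min_k F_k$.

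The first real step is to compute agent 2's best response to an arbitrary leader action $(t_1,k)$ with $t_1 \le t_o$. Agent 2 has exactly three relevant replies: flock (arrive at $t_1$ on route $k$), race ahead for $E_1$ (arrive at $t_1-\epsilon$ on its cheapest alone-route, incurring timing cost $\frac{1}{\beta_2}(t_1-t_o)^2$ in the limit), or concede and take $E_2$ at $t_o$ on its cheapest alone-route. Writing out the three utilities, two comparisons drive everything: flocking beats racing iff $E_1 - E_2 \le A^* - F_k$, and racing beats conceding iff $t_1 > T$, which is precisely where $T = t_o - \sqrt{(E_1-E_2)\beta_2}$ enters, since $(t_1-t_o)^2 < (E_1-E_2)\beta_2$ is exactly the condition making the $E_1$-grab worth its detour. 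I would stress that because the timing penalty is strictly convex, at $t_1 = T$ every race is \emph{strictly} worse than conceding, so $t_1^* = T$ deters racing cleanly, with no appeal to tie-breaking, while any $t_1 > T$ provokes a race that costs the leader $E_1$.

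With the best response in hand I would settle the two modes. On any fixed route flocking strictly dominates competing for the leader (it avoids the timing penalty and halves the risk while still yielding $E_1$), so cooperation is the leader's preferred outcome whenever it is sustainable; sustainability is exactly the condition that agent 2 not race against a leader at $t_o$, i.e.\ $E_1 - E_2 \le A^* - F^*$ (the flock-optimal route being the binding one). In that regime the unique leader optimum is $t_1^* = t_o$ on route $\arg\min_k F_k$, giving utility $E_1 - F^*$. When the inequality fails, flocking provokes a race, so the leader must retreat to the latest racing-proof time $t_1^* = T$ on route $\arg\min_k A_k$, after which agent 2 concedes $E_2$ at $t_o$ on the same alone-optimal route, consistent with the Corollary. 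Comparing the two regime utilities (and verifying that conceding $E_2$ is never better for the leader, which holds since $\beta_1 > \beta_2$) yields uniqueness away from boundaries.

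The final and most laborious step is the route bookkeeping that produces the five cases. I would show that $\arg\min_k F_k$ switches from route 2 to route 1 as $c_o\delta_1^2$ crosses $\frac{r}{2\lambda}$, while $\arg\min_k A_k$ switches as it crosses $\frac{r}{\lambda}$; substituting the appropriate $(F^*,A^*)$ into $A^* - F^*$ then reproduces the thresholds on $E_1 - E_2$, namely $\frac{r}{2\lambda\delta_1}$ in case (1), $(\lambda-1)c_o\delta_1 - \frac{(\lambda-2)r}{2\lambda\delta_1}$ in case (3), and $\frac{r}{2\delta_1}$ in cases (4)--(5). The two equality cases (2) and (5) are exactly where a route tie ($F_1 = F_2$ or $A_1 = A_2$) produces two coexisting SPEs, and this is where the shared deterministic tie-breaking rule is invoked to keep $x_1^* = x_2^*$. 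I expect this enumeration---tracking which route is optimal for flocking versus for traveling alone, and matching each resulting threshold and boundary tie---to be the main obstacle, rather than any single difficult inequality.
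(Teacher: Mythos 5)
Your proposal is correct and follows essentially the same route as the paper's proof: reduce to the form $(t_1^*,x^*,t_o,x^*)$ via Corollary~\ref{coro:x1=x2 all} and Proposition~\ref{coro:t1 < = t2 = t0}, derive $T$ as the follower's indifference (racing-deterrence) time, and then enumerate cases by which route minimizes the flock cost $c_o\delta_k+\tfrac{r}{2\delta_k}$ versus the alone cost $c_o\delta_k+\tfrac{r}{\delta_k}$ (thresholds $\tfrac{r}{2\lambda}$ and $\tfrac{r}{\lambda}$), with cooperation sustainable exactly when $E_1-E_2$ is below the resulting $A^*-F^*$ gap. Your explicit $A_k/F_k$ bookkeeping and the leader's concede-versus-compete check (using $\beta_1>\beta_2$) are just slightly more systematic packagings of the same computations the paper carries out case by case.
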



These cases are depicted in Figure \ref{fig:4MFFG}. 
We briefly explain each case in Theorem~\ref{theo:all SPE} below and highlight several insights:
\begin{itemize}
\item Case (1): The condition of this case suggests the reduction in risk provided by the harder route surpasses the additional travel cost ($c_o \delta_1^2 < \frac{r}{2\lambda}$); therefore both agents tend to choose the harder route, regardless of whether they are competing. In this situation, their decision to compete rests solely on evaluating the resource or territory differences against the benefits of flocking.

\item Case (2): This is the boundary case of (1), where the two routes \rev{are essentially equivalent if shared}. As a result, if they are not competing for the better territory, Agent 1 is indifferent between the two routes, while Agent 2 will follow Agent 1 and choose the same route.

\item Case (3): 
This is a case where the agents favor the harder route if they are competing, and the easier route if they are not. Thus, their decision to compete is influenced not only by the balance between territory differences and the potential benefits of forming a group, but also by their choice of route. 

\item Case (4): The condition of this case suggests that the savings in travel cost of the easier route outweigh the extra risk it entails ($c_o \delta_1^2 > \frac{r}{\lambda}$); therefore, both agents are inclined to select the easier route, irrespective of their competitive status. Here, their decision to compete is based on weighing the territory difference against the potential advantages of flocking.

\item Case (5): This is the boundary case of (4), where the two routes \rev{are essentially equivalent if not shared. As a result, if the two agents are competing, Agent 1 chooses the easier route and is joined by agent 2; otherwise, if traveling alone, they are indifferent between the two routes.}

\end{itemize}

\begin{figure}[ht!]
\centerline{\includegraphics[width=0.99\linewidth]{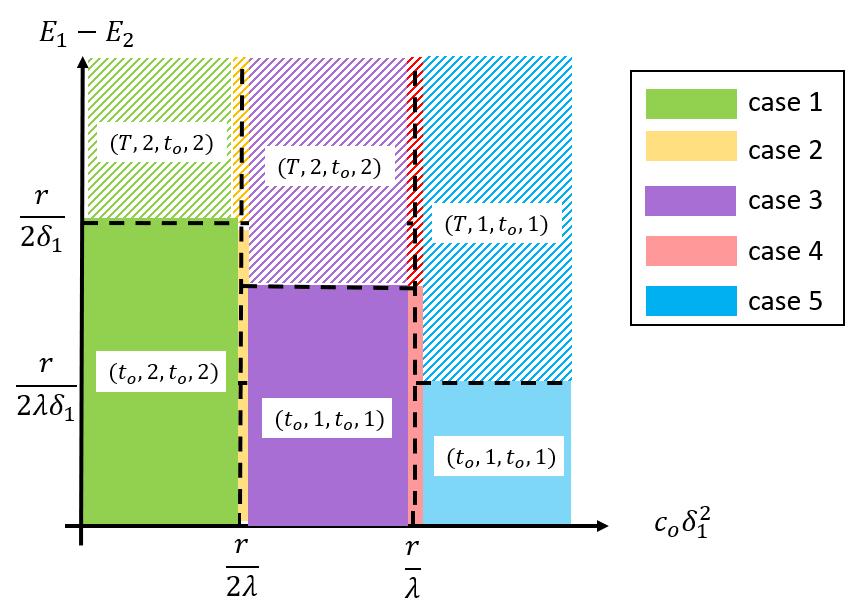}}
\caption{ 
A depiction of the different cases outlined in Theorem~\ref{theo:all SPE} and the corresponding SPEs.
The $x$-axis is the first condition in each case and represents the tradeoff between travel cost and risk, which is captured by the relationship between $c_o\delta_1^2$ and $\frac{r}{\lambda}$. The $y$-axis is the second condition in each case and represents the difference between the quality of the two territories $E_1 - E_2$. When $c_o \delta_1^2 \in (\frac{r}{2\lambda}, \frac{r}{\lambda})$, the horizontal line refers to \rev{$E_1-E_2 = (\lambda-1)c_o\delta_1 - \frac{(\lambda - 2)r}{2 \lambda \delta_1}$}. Solid colors represent SPEs without competition, patterned colors with competition. For cases 2 and 4, the corresponding SPEs include those in adjacent areas on both sides (solid and patterned).}
\label{fig:4MFFG}
\end{figure}

\subsection{The 2-agent $n$-route game}

The 2-route game can be generalized to an $n$-route game in a straightforward manner. By following arguments very similar to Corollary~\ref{coro:x1=x2 all} and Proposition~\ref{coro:t1 < = t2 = t0}, we have the following result. 
\begin{coro}
    In any SPE of a 2-agent n-route game, $t_1^*\leq t_2^* = t_o$ and $x_1^* = x_2^*$.
\end{coro}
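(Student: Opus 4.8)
The plan is to lift the 2-route results (Propositions~\ref{prop:t0}--\ref{coro:t1 < = t2 = t0} together with Corollary~\ref{coro:x1=x2 all}) to $n$ routes by verifying that each underlying argument never actually used the restriction to two routes. Since the corollary bundles two claims, $x_1^*=x_2^*$ and $t_1^*\le t_2^*=t_o$, I would establish them in the same order as in the 2-route case: first the common-route claim, then the timing claim.

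For the common-route claim I would split on whether $t_1^*=t_2^*$. The argument behind Proposition~\ref{prop:x1=x2, t1=t2} is purely an arrival-time deviation: if $t_2^*=t_1^*$ but $x_2^*\ne x_1^*$ then agent~2 flocks with no one, so advancing its arrival by $\epsilon$ lets it seize the better territory at negligible added cost; this deviation is available and profitable regardless of how many routes exist, so $t_1^*=t_2^*$ still forces $x_1^*=x_2^*$. The substantive step is the analog of Proposition~\ref{prop:x1=x2, t1 neq t2}, the case $t_1^*\ne t_2^*$. Here neither agent flocks, the risk term collapses to $r/\delta(x_i)$, and the only route-dependent part of agent~$i$'s utility is $-c_o\delta(x_i)-\frac{r}{\delta(x_i)}$. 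This is the \emph{same} function of route difficulty for both agents, because the strength $\beta_i$ enters only the timing term $(t_i-t_o)^2/\beta_i$ and the marginal cost $c_o$ is common. Hence each agent independently minimizes $f(\delta)=c_o\delta+r/\delta$ over the common finite set $\{\delta_1,\dots,\delta_n\}$, the two minimizer sets coincide, and the shared deterministic tie-break selects the same route for both. Combining the two cases gives $x_1^*=x_2^*$ in every SPE.

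For the timing claim I would reuse the three-case contradiction of Proposition~\ref{coro:t1 < = t2 = t0} essentially verbatim. Proposition~\ref{prop:t0} (no arrival after $t_o$) is again established by an arrival-time deviation whose validity does not depend on the route count. Then: (i) if $t_2^*<t_1^*\le t_o$, agent~2 can delay toward $t_o$ while remaining ahead of agent~1, keeping $E_1$ and cutting its cost; (ii) if $t_1^*<t_2^*<t_o$, agent~2 is not flocking and can delay to $t_o$; (iii) if $t_1^*=t_2^*<t_o$, then by the common-route claim the agents are flocking, and agent~1 can delay by $\epsilon$ without breaking the flock, strictly improving. Every one of these deviations is defined solely through arrival times, so none is affected by having $n$ routes, and all three cases are excluded exactly as before, leaving $t_1^*\le t_2^*=t_o$.

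The one place that genuinely requires the generalization to be spelled out is the minimization of $f(\delta)=c_o\delta+r/\delta$: with two routes the minimizer is always an endpoint, whereas for $n\ge 3$ it may be an interior route near $\delta^*=\sqrt{r/c_o}$. The argument above does not rely on the minimizer being an endpoint — it uses only that $f$ is agent-independent — so the common-route conclusion survives; I expect the only care needed is to state explicitly that the (possibly interior) discrete minimizer is resolved identically by the shared tie-breaking rule, including the knife-edge case where two difficulties yield equal $f$-value. This is the main, and essentially the only, obstacle.
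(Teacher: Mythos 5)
Your proposal is correct and takes essentially the same route as the paper, which itself only asserts that the corollary follows "by arguments very similar to" Corollary~\ref{coro:x1=x2 all} and Proposition~\ref{coro:t1 < = t2 = t0}; your case split (same-time vs.\ different-time for the route claim, then the three timing contradictions) is exactly the intended lift, and you correctly isolate the one genuinely new point, namely that the common minimizer of $c_o\delta + r/\delta$ (or $c_o\delta + r/(2\delta)$) may now be an interior route and must be resolved by the shared tie-breaking rule.
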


Similarly, the optimal route selection for both agents depends on the relationship between the associated travel cost and risk. Since a more challenging route offers lower risk but incurs higher travel costs, any route that minimizes the combination of these two, given $c_o$ and $r$, \rev{will be chosen by Agent 1 when traveling alone. And subsequently agent 2 will do exactly the same.}
Additionally, when the two agents compete for territories ($t_1^* = T$ and $t_2^* = t_o$), the optimal arrival time for agent 1  is independent of the route choice. Therefore, the optimal route choice is given by $x^* = \underset{x \in \mathcal{X}}{\arg\min} \, (c_o \delta(x) + \frac{r}{\delta(x)})$. If they do not compete ($t_1^* = t_2^* = t_o$), the optimal route choice is given by $x^* = \underset{x \in \mathcal{X}}{\arg\min} \, (c_o \delta(x) + \frac{r}{2\delta(x)})$ \rev{on account of benefits from flocking}. \rev{Consequently, with suitable tie-breaking, only 2 routes out of $n$ are of relevance and appear in any SPE.} 

\begin{coro}
   \rev{The pure strategy SPEs of the 2-agent $n$-route game are of the following types:}\\
   (1) $(t_o, x^*, t_o, x^*)$, where $x^* = \underset{x \in \mathcal{X}}{\arg\min} \, (c_o \delta(x) + \frac{r}{2\delta(x)})$;\\
   (2) $(T, x^*, t_o, x^*)$, where $T = t_o-\sqrt{(E_1-E_2)\beta_2}$ and $x^* = \underset{x \in \mathcal{X}}{\arg\min} \, (c_o \delta(x) + \frac{r}{\delta(x)})$. 
\end{coro}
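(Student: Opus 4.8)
The plan is to derive this characterization by combining the structural facts already established for the $n$-route game with a short optimization argument that pins down the common route index and, in the competing case, the leader's arrival time. The preceding corollary already guarantees that in any pure-strategy SPE the follower arrives at the cost-minimizing time $t_2^*=t_o$, that the leader arrives no later ($t_1^*\le t_o$), and that both agents travel on a common route $x_1^*=x_2^*=x^*$. First I would split into the only two remaining structural possibilities: either $t_1^*=t_o$ (no competition, the agents flock) or $t_1^*<t_o$ (competition). These correspond to types (1) and (2) respectively, so it remains to identify $x^*$ in each case and the value of $T$ in the second.

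For the non-competing case, both agents arrive at $t_o$ on the common route, so each incurs the minimal timing cost and shares the route, reducing its risk to $r/(2\delta(x))$. Since the leader moves first and its reward $E_1$ is fixed by the tie-break rule, its utility among flocking outcomes is $E_1-\bigl(c_o\delta(x)+\tfrac{r}{2\delta(x)}\bigr)$, which is maximized at $x^*=\arg\min_{x\in\mathcal{X}}\bigl(c_o\delta(x)+\tfrac{r}{2\delta(x)}\bigr)$; by the common-route corollary the follower matches this choice, giving type (1). I would then confirm this is a genuine SPE by the same undercutting check used in the two-route theorem: the follower prefers to flock and collect $E_2$ at reduced risk rather than shave $\epsilon$ off its arrival to seize $E_1$ at full risk, which holds precisely when $E_1-E_2\le \tfrac{r}{2\delta(x^*)}$.

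For the competing case the two agents share a route but arrive at different times, so neither enjoys the flocking discount and each pays the full risk $r/\delta(x)$. The key observation is that the leader's route choice decouples from its timing choice: its timing cost $\tfrac{1}{\beta_1}(t_1-t_o)^2$ does not involve the route, and the deterrence time $T$ depends only on $E_1-E_2$ and $\beta_2$. Hence the leader minimizes $c_o\delta(x)+\tfrac{r}{\delta(x)}$, giving $x^*=\arg\min_{x}\bigl(c_o\delta(x)+\tfrac{r}{\delta(x)}\bigr)$, the route in type (2). To fix $T$, I would compute the follower's best response when the leader commits to $(t_1,x^*)$ with $t_1<t_o$: arriving at $t_o$ yields $E_2-c_o\delta(x^*)-r/\delta(x^*)$, while undercutting to $t_1-\epsilon$ yields $E_1-\tfrac{1}{\beta_2}(t_1-t_o)^2-c_o\delta(x^*)-r/\delta(x^*)$. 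The follower declines to undercut exactly when $(t_o-t_1)^2\ge(E_1-E_2)\beta_2$, and the leader, wanting $t_1$ as close to $t_o$ as possible, sets $t_1^*=T=t_o-\sqrt{(E_1-E_2)\beta_2}$.

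I expect the main obstacle to be the verification direction together with the degeneracies hidden in the $\arg\min$. Necessity of the two forms follows almost entirely from the preceding corollary, but to claim these are exactly the SPEs I must check that the prescribed profiles are mutual best responses and that no profitable leader deviation exists -- in particular that the leader weakly prefers its chosen regime (flock versus compete) given the optimal route for each regime, which reduces to a pairwise utility comparison analogous to the case analysis of Theorem~\ref{theo:all SPE} (and uses $\beta_1>\beta_2$ to ensure preemption is cheaper for the leader than surrendering $E_1$). A second delicate point is the $\arg\min$ itself: both objectives are of the form $a\delta+b/\delta$ and may be attained at more than one route when two difficulties straddle the unconstrained minimizer $\sqrt{b/a}$; this is where the ``suitable tie-breaking'' assumption enters, and I would invoke it to select a single $x^*$ and to justify that only the two minimizers (one per regime) ever appear, so that the remaining $n-2$ routes are irrelevant in equilibrium.
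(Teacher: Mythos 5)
Your proposal is correct and follows essentially the same route as the paper: invoke the preceding corollary to reduce to the two timing regimes with a common route, identify $x^*$ as the minimizer of $c_o\delta(x)+\frac{r}{2\delta(x)}$ or $c_o\delta(x)+\frac{r}{\delta(x)}$ depending on whether the risk is shared, and pin down $T$ via agent 2's indifference between undercutting for $E_1$ and arriving at $t_o$ for $E_2$. Your added attention to the verification step and to ties in the $\arg\min$ is consistent with (and slightly more explicit than) the paper's ``suitable tie-breaking'' remark.
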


\section{Discussion}\label{sec:discussion}
We first compare our model to the earlier game model without route selection. Examined in the 2-agent case, we show that the introduction of route difficulty can increase competition between agents. Furthermore, accounting for individual differences in the marginal travel costs significantly impacts route optimization and equilibrium outcomes, resulting in an enriched set of SPEs. In particular, one new type of SPE, referred to as a ``neutrality" state, emerges when agents arrive synchronously but on different routes. We end this section by connecting our model closer to the bird migration context and offering interpretations of the different elements of this model in this context.

\subsection{Comparison with the earlier flock formation game  \cite{wang2023cooperation}}

If we reduce the route choices to only one, then the current game reduces to the earlier basic flock formation game in \cite{wang2023cooperation}. \rev{In particular, the following follows directly from Theorem \ref{theo:all SPE}:} 
\begin{coro}
   There exist exactly two types of pure strategy SPEs for the 2-agent 1-route game:\\
   (1) If $E_1 - E_2 \leq \frac{r}{2\delta_1}$, then  $(t_o, 1, t_o, 1)$ is the unique SPE;\\
   (2) If $E_1 - E_2 > \frac{r}{2\delta_1}$, then  $(T, 1, t_o, 1)$, $T = t_o-\sqrt{(E_1-E_2)\beta_2}$ is the unique SPE.
\end{coro}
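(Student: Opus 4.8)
The plan is to obtain the result as the $n=1$ restriction of Theorem~\ref{theo:all SPE}, supplying the one comparison that the theorem leaves implicit once the route dimension is removed. First I would observe that with $\mathcal{X}=\{1\}$ the route decision is vacuous: both agents are forced onto route 1, so $x_1^*=x_2^*=1$ automatically, and Corollary~\ref{coro:x1=x2 all} and Proposition~\ref{coro:t1 < = t2 = t0} continue to hold (their route-related content being trivially satisfied). In particular Proposition~\ref{coro:t1 < = t2 = t0} fixes $t_2^*=t_o$ and $t_1^*\le t_o$, and the full-preemption reasoning already used in establishing Theorem~\ref{theo:all SPE} leaves exactly two candidate profiles: the no-competition profile $(t_o,1,t_o,1)$ and the competition profile $(T,1,t_o,1)$ with $T=t_o-\sqrt{(E_1-E_2)\beta_2}$.

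Next I would determine which candidate survives by analysing the follower's best response to the leader arriving at $t_o$. Agent 2 has two relevant replies: flock at $t_o$ and accept the inferior territory $E_2$ at the halved risk $\frac{r}{2\delta_1}$, or undercut by arriving an instant before $t_o$ to seize $E_1$ alone at the full risk $\frac{r}{\delta_1}$ (its timing penalty vanishing as the lead shrinks). Comparing the two utilities shows the undercut is unprofitable precisely when $E_1-E_2\le \frac{r}{2\delta_1}$. This single inequality is exactly the $y$-axis threshold appearing in the route-1 branch of Theorem~\ref{theo:all SPE} (case (4)), which is why the corollary ``follows directly'': with only route 1 present, route 1 is trivially the minimiser of both the flocking objective $c_o\delta(x)+\frac{r}{2\delta(x)}$ and the competing objective $c_o\delta(x)+\frac{r}{\delta(x)}$, so the game lands in that case and the dichotomy collapses to this comparison.

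I would then close the two branches. If $E_1-E_2\le \frac{r}{2\delta_1}$, the leader can arrive at $t_o$ without provoking an undercut; since arriving at $t_o$ already secures $E_1$ at the minimum possible travel and risk cost, no deviation improves the leader, so $(t_o,1,t_o,1)$ is the unique SPE. If $E_1-E_2> \frac{r}{2\delta_1}$, arriving at $t_o$ is no longer an equilibrium since agent 2 would undercut, so the leader must preempt; the latest arrival time leaving agent 2 indifferent between conceding $E_2$ at $t_o$ and undercutting for $E_1$ is exactly $T=t_o-\sqrt{(E_1-E_2)\beta_2}$, giving the unique SPE $(T,1,t_o,1)$.

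The main obstacle I anticipate is the optimality and uniqueness check in the competition branch rather than the threshold computation: I must verify that preempting at $T$ beats conceding the better territory, i.e.\ that the leader's utility $E_1-\frac{1}{\beta_1}(T-t_o)^2-c_o\delta_1-\frac{r}{\delta_1}$ exceeds the conceding value $E_2-c_o\delta_1-\frac{r}{\delta_1}$. This follows from $\beta_1>\beta_2$, which makes the preemption penalty $\frac{1}{\beta_1}(T-t_o)^2=\frac{\beta_2}{\beta_1}(E_1-E_2)$ strictly smaller than the gap $E_1-E_2$; the only remaining care is to confirm that no intermediate arrival strictly between $T$ and $t_o$ is a profitable deviation for either agent, which the indifference/preemption structure of $T$ rules out.
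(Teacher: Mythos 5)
Your proposal is correct and takes essentially the same route as the paper, which simply states that the corollary follows directly from Theorem~\ref{theo:all SPE}: with a single route the game lands in the route-1 branch, and the follower's flock-versus-undercut comparison yields the threshold $E_1-E_2\le\frac{r}{2\delta_1}$, with $T=t_o-\sqrt{(E_1-E_2)\beta_2}$ as the preemption time otherwise. Your added check that the leader prefers preempting at $T$ to conceding (via $\beta_1>\beta_2$) is a detail the paper leaves implicit in the theorem's proof, but it does not change the argument.
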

Setting $\delta_1 =1$ gives the version obtained in \cite{wang2023cooperation}. We additionally note that if $\delta_1 >1$, then $\frac{r}{2\delta_1} < \frac{r}{2}$, indicating that competition between the two agents is more intense \rev{(the parameter range in which they compete is larger) when the route becomes harder relative to other constants in the model.  This is understandable as a harder route increases travel cost while decreasing the predation risk, thereby reducing the benefit of flocking.} 

\subsection{The significance of individual strength and marginal travel cost}\label{subsec: extension}
Our analysis so far is based on the simplification of equal marginal travel cost between the two agents: $c_o^1 = c_o^2 = c_o$, and our main observation is that the agents will either (fully) cooperate (same time, same route) or compete (different time, same route). The fact that they always prefer the same route (under suitable tie-breaking) is rooted in this simplifying assumption that ignores individual differences in marginal travel costs on the same route. In reality this individual difference does play a role, e.g., in bird migration, a harder route may be more costly for weaker individuals, such as the young and/or female, than for stronger individuals, such as adult male.

A natural extension is to adopt the assumption that $c_0^1 < c_o^2$. Consequently, Corollary~\ref{coro:x1=x2 all} is no longer valid in general, as the optimal route choice for one agent may differ from that for another. Consider for instance the scenario where $t_1^* < t_2^* = t_o$. In such cases, if $c_o^i \delta_1^2 \geq \frac{r}{\lambda}$, then $x_i^* = 1$; conversely, if $c_o^i \delta_1^2 < \frac{r}{\lambda}$, then $x_i^* = 2$. Given $c_0^1 < c_o^2$, a situation may arise where $c_o^1 \delta_1^2 < \frac{r}{\lambda} \leq c_o^2 \delta_1^2$, resulting in $x_1^* = 2$ and  $x_2^* = 1$. Similar conditions exist for $t_1^* = t_2^* = t_o$ where $x_1^* \neq x_2^*$. \rev{We will refer to these new types of SPEs as a state of \textit{neutrality}, in contrast with both {\em competition} and {\em cooperation}, the latter two of which we have been using to describe different SPEs.} In this type of SPEs, agents exhibit independence in route selection while maintaining coordination in arrival times, which determines their territory/reward. 
This state adds an interesting complexity to the equilibrium analysis as agents independently optimize their routes based on individual cost-benefit analyses, resulting in stable yet uncoordinated route choices that do not interfere with the overall timing coordination.

\rev{With this extension, two new SPEs are introduced, resulting in the complete set of SPEs given by: 
$(T, x^*, t_o, x^*)$, $(T, x^*, t_o, x^{-*})$, $(t_o, x^*, t_o, x^*)$, and $(t_o, x^*, t_o, x^{-*})$,  where $x^{-*} = \{1, 2\}-x^{*}$.} 

This expanded set of SPEs reflects the diversity of strategic arrangements arising from both the coordination of timing and the independent route selections based on individual agent preferences. By accounting for these individual differences, the model illustrates how agents can reach different equilibria that accommodate their unique cost-benefit analyses, leading to varied strategic outcomes.

\subsection{Observations and interpretations in the context of bird migration}
In bird migration, the timing and route selection are crucial decisions that seasonal migrants must carefully consider \cite{alerstam2001detours, newton2023migration}. Routes are typically chosen based on factors such as predation risk, energy efficiency, and food resources, with avian species often relying on social networks such as their flocks to guide their decisions and reduce travel cost or predation risk \cite{egevang2010tracking, mckinnon2010lower}. The timing of the journey is similarly influenced by these factors, as many migrants aim to minimize risks and maximize opportunities by aligning their travel with seasonal changes. \rev{The formation of groups during migration is commonly observed \cite{mirzaeinia2019energy, beauchamp2011long, beauchamp2021flocking, janiszewski2014timing}, with strong evidence that} it increases energy efficiency and navigation accuracy, and decreases predation risk. The dynamics of flock formation highlight the complex interplay between individual choices and environmental influences during migration. This interplay is further influenced by age or sex-specific factors, which play a significant role in shaping migration patterns and decisions. 

Differences in route selection between young and adult male birds during spring migration are largely influenced by variations in experience, energy reserves, and competitive pressures \cite{verhoeven2022age, mckinnon2014tracking}. Young males tend to take longer, less direct routes with more stopovers, often avoiding high-competition areas to compensate for lower fat reserves and inexperience. In contrast, adult males opt for shorter, more direct routes with fewer stopovers, prioritizing early arrival at breeding grounds and relying on learned migration strategies. 
Furthermore, strong patterns of sex-based differential migration \rev{has also been observed} in many avian species \cite{mckellar2025sex}. 
Our 2-agent $n$-route game could be used to capture this difference in decisions about departure timing (determined by time of arrival and route selection). By considering adult males (or males) as the stronger agents and juvenile males (or females) as the weaker agents based on average strength, our model takes into account the travel costs and risks of individual differences. It explores the dilemma between choosing a more difficult or longer route to avoid predation and the potential costs of such a route. \rev{Additional individual differences in migratory routes and behaviors are in general less common than age and sex-based differences.}  However, increasing effects of climate change may stimulate greater individual variation in route selection among migratory birds; therefore, our extended model discussed in Section~\ref{subsec: extension} can be particularly useful for capturing and understanding these dynamics.

\rev{It also bears mentioning that the simpler model which assumes a uniform marginal cost, and which results in agents favoring the same route (Section \ref{sec:equilibrium}), also offers insight and explanation} of the more commonly observed phenomenon
in this literature: many avian migrants, regardless of sex or age, typically choose routes based on experience with established paths or those recommended by social networks, or travel in mixed-species groups, leading to shared route patterns. 

\section{Conclusion}\label{sec:conclusion} 
We introduced a Stackelberg game to study how rational agents make strategic decisions on timing and route selection in order to reach a destination and collect a reward.  They benefit from group formation if they travel at the same time (measured by simultaneous arrival) and on the same route. We analyzed the properties of the SPEs of this game, highlighting a richer set of SPE types with intensified competition compared to earlier work that focused solely on timing without considering route selection.

The main analysis is centered on the 2-agent 2-route model, but the results are easily generalized to the 2-agent n-route game. We first showed that when the difficulty of a route is perceived the same by both agents (the same marginal travel cost), then in any SPE, the agents share the same route preference. 
We then extended our model by considering individual differences in marginal travel costs, and showed that an additional ``neutrality" state may emerge in equilibrium, where agents arrive simultaneously but choose separate routes. 
Both models find support in the bird migration literature. An important next step would be to incorporate climate-driven changes in route conditions and agent decision-making into the model, thereby reflecting the greater individual variation seen as migratory environments shift.

\nocite{*}
\bibliographystyle{unsrt}
\bibliography{myreference}

\appendix
The following is the full proofs for Proposition~\ref{prop:x1=x2, t1=t2}, Proposition~\ref{prop:x1=x2, t1 neq t2}, and Proposition~\ref{coro:t1 < = t2 = t0} respectively.

\begin{proof}[Proof of Proposition~\ref{prop:x1=x2, t1=t2}]
    This can be shown by contradiction. Let's assume that $\mathbf{t^*, x^*} = (t_1^*, x_1^*, t_2^*, x_2^*)$, where $t_1^* = t_2^* = t^*$ and $x_1^* \neq x_2^*$ is an SPE. This leads us to two potential scenarios for route selections: (1) $x_1^* = 1$, $x_2^* = 2$, (2) $x_1^* = 2$, $ x_2^* = 1$. Let's denote $\frac{1}{\beta_i}(t^* - t_o)^2$ as $c(t^*)$.
    
    {\bf Case 1: $x_1^* = 1$, $x_2^* = 2$}
    
    Based on our assumption, for agent 2: $\forall x_2 \in \{1,2\}$, $u_2(t^*, x_1^*, t^*, x_2^*) \geq u_2(t^*, x_1^*, t^*, x_2)$. 
    \begin{equation}
        u_2(t^*, x_1^* = 1, t^*, x_2^* = 2) = E_2 - c(t^*)- c_o\lambda\delta_1 - \frac{r}{\lambda\delta_1}
    \end{equation}
        \begin{equation}
        u_2(t^*, x_1^* = 1, t^*, x_2 = 1) = E_2 -  c(t^*)-c_o\delta_1 - \frac{r}{2\delta_1}
    \end{equation}
    Therefore, we find 
        \begin{equation}\label{eq:condition1}
            \frac{\delta_1^2c_o}{r} \leq \frac{\lambda -2}{2\lambda (\lambda -1)}
        \end{equation} 
    Given that $\frac{\delta_1^2c_o}{r} > 0$, it implies $\lambda > 2$.
    If agent 1 switches to the other route (i.e., $x_1 = 2$), we can find the best response for agent 2 by comparing the following:
             \begin{equation}
                 u_2(t^*, x_1 = 2, t^*, x_2 = 1) = E_2 -   c(t^*) - c_o\delta_1 - \frac{r}{\delta_1} 
             \end{equation}
            \begin{equation}
            u_2(t^*, x_1 = 2, t^*, x_2 = 2) = E_2 -  c(t^*)- c_o\lambda\delta_1 - \frac{r}{2\lambda\delta_1} \end{equation}
    We have $u_2(t^*, x_1 = 2, t^*, x_2 = 1) - u_2(t^*, x_1 = 2, t^*, x_2 = 2) = \frac{(\lambda -1)r}{\delta_1} (\frac{\delta_1^2c_o}{r} - \frac{2\lambda -1}{2\lambda (\lambda -1)})$. Since $2\lambda - 1 > \lambda -2$ and the condition in \eqref{eq:condition1}, we find $u_2(t^*, x_1 = 2, t^*, x_2 = 1) - u_2(t^*, x_1 = 2, t^*, x_2 = 2) < 0$. Therefore, $x_2^*(t^*, x_1 = 2, t^*) = 2$.
    Next, we check whether agent 1 has an incentive to deviate to the other route, given agent 2'best responses:
            \begin{equation}
            u_1(t^*, x_1^* = 1, t^*, x_2^* = 2) = E_1-  c(t^*) - c_o\delta_1 - \frac{r}{\delta_1} 
            \end{equation}
            \begin{equation}
            u_1(t^*, x_1 = 2, t^*, x_2^* = 2) = E_1 -  c(t^*)- c_o\lambda\delta_1 - \frac{r}{2\lambda\delta_1} \end{equation}
    We have $u_1(t^*, x_1 = 2, t^*, x_2^* = 2) - u_1(t^*, x_1^* = 1, t^*, x_2^* = 2) = \frac{(\lambda -1)r}{\delta_1} (\frac{2\lambda -1}{2\lambda (\lambda -1)} - \frac{\delta_1^2c_o}{r})$. Again since $2\lambda - 1 > \lambda -2$ and the condition given by \eqref{eq:condition1}, we find $u_1(t^*, x_1 = 2, t^*, x_2^* = 2) - u_1(t^*, x_1^* = 1, t^*, x_2^* = 2)>0$. In other words, agent 1 has an incentive to deviate to the other route, which implies that $\mathbf{t^*, x^*} = (\cw{t^*}, x_1^*=1, \cw{t^*}, x_2^*=2)$ is not an SPE.

        {\bf Case 2: $x_1^* = 2$, $x_2^* = 1$}
    
    Based on our assumption, for agent 2: $\forall x_2 \in \{1,2\}$, $u_2(t^*, x_1^*, t^*, x_2^*) \geq u_2(t^*, x_1^*, t^*, x_2)$. 
           \begin{equation}
                u_2(t^*, x_1^* = 2, t^*, x_2^* = 1) = E_2 -  c(t^*)- c_o\delta_1 - \frac{r}{\delta_1}
           \end{equation}
        \begin{equation}
        u_2(t^*, x_1^* = 2, t^*, x_2 = 2) = E_2 -  c(t^*)- c_o\lambda\delta_1 - \frac{r}{2\lambda\delta_1} 
        \end{equation}
    Therefore, we find 
        \begin{equation}\label{eq:condition2}
            \frac{\delta_1^2c_o}{r} \geq \frac{2\lambda -1}{2\lambda (\lambda -1)}
        \end{equation} 
    If agent 1 switches to the other route (i.e., $x_1 = 1$), we can find the best response for agent 2 by comparing the following:
            \begin{equation}
            u_2(t^*, x_1 = 1, t^*, x_2 = 1) = E_2-  c(t^*) - c_o\delta_1 - \frac{r}{2\delta_1} 
            \end{equation}
            \begin{equation}
            u_2(t^*, x_1 = 1, t^*, x_2 = 2) = E_2-  c(t^*) - c_o\lambda\delta_1 - \frac{r}{\lambda\delta_1} 
            \end{equation}
    We have $u_2(t^*, x_1 = 1, t^*, x_2 = 1) - u_2(t^*, x_1 = 1, t^*, x_2 = 2) = \frac{(\lambda -1)r}{\delta_1} (\frac{\delta_1^2c_o}{r} - \frac{\lambda -2}{2\lambda (\lambda -1)})$. Since $2\lambda - 1 > \lambda -2$ and the condition in \eqref{eq:condition2}, we find $u_2(t^*, x_1 = 1, t^*, x_2 = 1) - u_2(t^*, x_1 = 1, t^*, x_2 = 2) > 0$. Therefore, $x_2^*(t^*, x_1 = 1, t^*) = 1$.
    Next, we check whether agent 1 has an incentive to deviate to the other route, given agent 2'best responses:
            \begin{equation}
            u_1(t^*, x_1^* = 2, t^*, x_2^* = 1) = E_1 -  c(t^*)- c_o\lambda\delta_1 - \frac{r}{\lambda\delta_1} 
            \end{equation}
            \begin{equation}
            u_1(t^*, x_1 = 1, t^*, x_2^* = 1) = E_1 -  c(t^*)- c_o\delta_1 - \frac{r}{2\delta_1}  
            \end{equation}
    We have $u_1(t^*, x_1^* = 2, t^*, x_2^* = 1) -  u_1(t^*, x_1 = 1, t^*, x_2^* = 1) = \frac{(\lambda -1)r}{\delta_1} (\frac{\lambda -2}{2\lambda (\lambda -1)} - \frac{\delta_1^2c_o}{r})$. Again since $2\lambda - 1 > \lambda -2$ and the condition given by \eqref{eq:condition2}, we find $u_1(t^*, x_1^* = 2, t^*, x_2^* = 1) - u_1(t^*, x_1 = 1, t^*, x_2^* = 1) < 0$. In other words, agent 1 has an incentive to deviate to the other route, which implies that $\mathbf{t^*, x^*} = (t^*, x_1^*=2, \cw{t^*}, x_2^*=1)$ is not an SPE.
\end{proof}

\begin{proof}[Proof of Proposition~\ref{prop:x1=x2, t1 neq t2}]
This can be shown by contradiction.  Let's assume that $\mathbf{t^*, x^*} = (t_1^*, x_1^*, t_2^*, x_2^*)$, where $t_1^* \neq t_2^* $ and $x_1^* \neq x_2^*$ is an SPE. Therefore, for agent 2: $\forall x_2 \in \{1,2\}$, $u_2(t_1^*, x_1^*, t_2^*, x_2^*) \geq u_2(t_1^*, x_1^*, t_2^*, x_2)$. Since $x_1^* \neq x_2^*$, we have $x_2 \in \{x_1^*, x_2^*\}$. Consequently, $u_2(t_1^*, x_1^*, t_2^*, x_2^*) \geq u_2(t_1^*, x_1^*, t_2^*, x_1^*)$. Define $\Delta u_2 = u_2(t_1^*, x_1^*, t_2^*, x_2^*) - u_2(t_1^*, x_1^*, t_2^*, x_1^*)$, leading $\Delta u_2 \geq 0$. Thus, 
\begin{equation}
    \Delta u_2 = (\delta(x_1^*) -\delta (x_2^*))(c_o - \frac{r}{\delta(x_2^*)\delta(x_1^*)}) \geq 0
\end{equation} 
Next, we examine the utility difference for agent 1 between choosing $x_2^*$ and $x_1^*$. Define $\Delta u_1 = u_1(t_1^*, x_2^*, t_2^*, x_2) - u_1(t_1^*, x_1^*, t_2^*, \hat{x}_2)$, where $x_2, \hat{x}_2 \in \{x_1^*, x_2^*\}$. Then, for all 
$x_2, \hat{x}_2 \in \{x_1^*, x_2^*\}$, we have
\begin{equation}
    \Delta u_1 = (\delta(x_1^*) -\delta (x_2^*))(c_o - \frac{r}{\delta(x_2^*)\delta(x_1^*)})
\end{equation}
Therefore, $\Delta u_1 = \Delta u_2$, and thus $\Delta u_1 \geq 0$. Thus, let $\hat{x}_2 = x_2^*$, we have $u_1(t_1^*, x_2^*, t_2^*, x_2) - u_1(t_1^*, x_1^*, t_2^*, x_2^*) \geq 0$. Therefore, $(t_1^*, x_1^*, t_2^*, x_2^*)$ is not an SPE.
\end{proof}

\begin{proof}[Proof of Proposition~\ref{coro:t1 < = t2 = t0}]
    This can be shown by two separate cases: \\
    (1) in any SPE where $t_1^* = t_2^*$, Proposition~\ref{prop:x1=x2, t1=t2} demonstrates $x_1^* = x_2^*$.  it is evident that the optimal arrival time for both players is $t_o$, in order to minimize their respective travel costs.  Therefore, we conclude that $t_1^* = t_2^*=t_o$; \\
    (2) in any SPE where $t_1^* \neq t_2^*$, First, we can show that $\max {(t_1^*, t_2^*)} = t_o$. Next, we show $t_1^* < t_2^*$ using contradiction. 
    \begin{enumerate}
        \item This can be shown using contradiction. Let's assume that the one that arrives later will not arrive at $t_o$ (i.e. $\max {(t_1^*, t_2^*)} \neq t_o$). From Proposition~\ref{prop:t0}, we have $\max {(t_1^*, t_2^*)} \leq t_o$. Therefore, our assumption is $\max {(t_1^*, t_2^*)} < t_o$. From Proposition~\ref{prop:x1=x2, t1 neq t2}, we know $x_1^* = x_2^*$. Let $x_1^* = x_2^* = x^*$. \\
(a) If agent 2 is the one who arrives later (i.e. $t_1^* < t_2^*<t_o$), then $u_2(t_1^*, x^*, t_2^*, x^*) - u_2(t_1^*, x^*, t_o, x^*) = -\frac{(t_2^*-t_o)^2}{\beta_2} < 0$. It indicates that agent 2 will deviate by arriving later at $t_o$, and it contradicts the fact that $(t_1^*, x_1^*, t_2^*, x_2^*)$ is an SPE. Therefore, $t_2^* = t_o$;\\
(b) If agent 1 is the one who arrives later (i.e., $t_2^* < t_1^*<t_o$), then if agent 1 chooses to arrive later at $t_o$, agent 2 will either arrive later or arrive at $t_o$. Note that the route selection may only change if both choose to arrive at the same time. However, this will reduce the risk by half and thus make the favorable route even better. Therefore, the route selection will stay the same. For either situation, we can find the utility of agent 1 increases: (a) $u_1(t_o, x^*, t_2 \in (t_2^*, t_o), x^*) - u_1(t_1^*, x^*, t_2^*, x^*) = \frac{(t_1^*-t_o)^2}{\beta_1} > 0$; (b) $u_1(t_o, x^*, t_o, x^*) - u_1(t_1^*, x^*, t_2^*, x^*) = \frac{(t_1^*-t_o)^2}{\beta_1} + \frac{r}{2 \delta(x^*)} > 0$. Thus, it contradicts with the fact that $(t_1^*, x_1^*, t_2^*, x_2^*)$ is an SPE. Therefore, $t_1^* = t_o$.\\
\cw{Therefore, combining (a) and (b), we have shown $\max {(t_1^*, t_2^*)} = t_o$.}
        \item Assume $ t_2^* < t_1^* = t_o$. For agent 2, it can always arrive slightly later at $t_2^* +\epsilon $ with $\epsilon \in (0, t_o-t_2^*)$ to decrease its traveling cost while securing the same resource and keeping the same risk. Therefore $t_2^* < t_o$ is not the best response for agent 2, and thus it is not an SPE. As a result, we have $t_1^* < t_2^* = t_o$.
    \end{enumerate}
\end{proof}

\begin{proof} [Proof of Theorem~\ref{theo:all SPE}]
By Corollary~\ref{coro:x1=x2 all} and Proposition~\ref{coro:t1 < = t2 = t0},  it follows that in any SPE, we have: $t_1^* \leq t_2^* = t_o$ and $x_1^* = x_2^*$. Let $x_1^* = x_2^* = x^*$.
We can begin by identifying the conditions under which both agents will prefer one route over the other in the following scenarios:
\begin{enumerate}
    \item When there is cooperation but no competition ($t_1^* = t_2^* = t_o$):
    \begin{itemize}
        \item If $c_o \delta_1^2 > \frac{r}{2\lambda}$, then $x^* = 1$;
        \item If $c_o \delta_1^2 = \frac{r}{2\lambda}$, then $x^* = 1$ or $x^* = 2$;
        \item If $c_o \delta_1^2 < \frac{r}{2\lambda}$, then $x^* = 2$;
    \end{itemize}
    \item When there is competition but no cooperation ($t_1^* < t_2^* = t_o$):
    \begin{itemize}
        \item If $c_o \delta_1^2 > \frac{r}{\lambda}$, then $x^* = 1$;
        \item If $c_o \delta_1^2 = \frac{r}{\lambda}$, then $x^* = 1$ or $x^* = 2$;
        \item If $c_o \delta_1^2 < \frac{r}{\lambda}$, then $x^* = 2$;
    \end{itemize}
\end{enumerate}
Therefore, by combining both scenarios, we can identify five distinct sets of conditions that determine route selection preferences.

Next, if agent 2 competes with agent 1 in any SPE, we need to determine the best arrival time for agent 1. Since the optimal route selection is always the same for both agents, the optimal arrival time is similar as that shown in \cite{wang2023cooperation}. It is the tipping time where agent 2 is indifferent between arriving alone at $t_o$ and arriving first at $T$:
\begin{equation}\label{eq:2collaborate}
        u_2(t_o, x^*, t_o, x^*) = E_2 - c_o\delta(x^*) - \frac{r}{\delta(x^*)}
    \end{equation}
    \begin{equation}\label{eq:2compete}
        u_2(t_o,  x^*, T,  x^*) = E_1 -  \frac{(T-t_o)^2}{\beta_2} - c_o\delta(x^*) - \frac{r}{\delta(x^*)}
    \end{equation}
By setting  $u_2(t_o, x^*, t_o, x^*) = u_2(t_o,  x^*, T,  x^*)$, we find $T = t_o - \sqrt{(E_1-E_2)\beta_2}$, which is exactly the same as that in \cite{wang2023cooperation}. Therefore, agent 1 must arrive at $t_1 \leq T$ to discourage agent 2 from competing for the better resource. An earlier arrival would increase the traveling cost, making $T$ the optimal arrival time for agent 1. Thus, when agents are competing, the only SPE is $(T, x^*, t_o, x^*)$.

Last, we identify the conditions under which agent 2 prefers to compete with agent 1 by arriving earlier, rather than cooperating, for each of the route selection preference conditions.
\begin{enumerate}
    \item If $c_o \delta_1^2 < \frac{r}{2\lambda}$, we need to compare the following for agent 2:
    \begin{equation}
        u_2(t_o, 2, t_o, 2) = E_2 - c_o\lambda \delta_1 - \frac{r}{2 \lambda \delta_1}
    \end{equation}
    \begin{equation}
        u_2(t_o, 2, t_o-\epsilon, 2) = E_1 - \frac{\epsilon^2}{\beta_2} - c_o\lambda \delta_1 - \frac{r}{ \lambda \delta_1}
    \end{equation}
    \begin{itemize}
        \item When $E_1-E_2 \leq \frac{r}{2\lambda \delta_1}$, then agent 2 will prefer to cooperate as $ u_2(t_o, 2, t_o, 2) \geq u_2(t_o, 2, t_o-\epsilon, 2)$. Thus $(t_o, 2, t_o, 2)$ is the unique SPE.
        \item Otherwise, $(T, 2, t_o, 2)$ is the unique SPE.
    \end{itemize} 
     \item If $c_o \delta_1^2 = \frac{r}{2\lambda}$, the situation resembles the previous case, but with a key difference: in this scenario, agents are indifferent between the two routes in the absence of competition. As a result:
     \begin{itemize}
        \item When $E_1-E_2 \leq \frac{r}{2\lambda \delta_1}$, 
        both $(t_o, 1, t_o, 1)$ and  $(t_o, 2, t_o, 2)$ are the SPEs.
        \item Otherwise, $(T, 2, t_o, 2)$ is the unique SPE.
    \end{itemize}
     \item If $c_o \delta_1^2 \in  (\frac{r}{2\lambda}, \frac{r}{\lambda})$, we need to compare the following for agent 2:
    \begin{equation}
        u_2(t_o, 1, t_o, 1) = E_2 - c_o \delta_1 - \frac{r}{2 \delta_1}
    \end{equation}
    \begin{equation}
        u_2(t_o, 2, t_o-\epsilon, 2) = E_1 - \frac{\epsilon^2}{\beta_2} - c_o\lambda \delta_1 - \frac{r}{ \lambda \delta_1}
    \end{equation}
    \begin{itemize}
        \item When $E_1-E_2 \leq (\lambda-1)c_o\delta_1 - \frac{(\lambda - 2)r}{2 \lambda \delta_1}$, then agent 2 will prefer to cooperate as $ u_2(t_o, 1, t_o, 1) \geq u_2(t_o, 2, t_o-\epsilon, 2)$. Thus $(t_o, 1, t_o, 1)$ is the unique SPE.
        \item Otherwise, $(T, 2, t_o, 2)$ is the unique SPE.
    \end{itemize} 
     \item If $c_o \delta_1^2 = \frac{r}{\lambda}$, the situation resembles the previous case, but with a key difference: in this scenario, agents are indifferent between the two routes in the presence of competition. As a result:
     \begin{itemize}
        \item When $E_1-E_2 \leq (\lambda-1)c_o\delta_1 - \frac{(\lambda - 2)r}{2 \lambda \delta_1}$, $(t_o, 1, t_o, 1)$ is the unique SPE.
        \item Otherwise, both $(T, 2, t_o, 2)$ and $(T, 1, t_o, 1)$ are the SPEs.
    \end{itemize}
    \item If $c_o \delta_1^2 > \frac{r}{\lambda})$, we need to compare the following for agent 2:
    \begin{equation}
        u_2(t_o, 1, t_o, 1) = E_2 - c_o \delta_1 - \frac{r}{2 \delta_1}
    \end{equation}
    \begin{equation}
        u_2(t_o, 1, t_o-\epsilon, 1) = E_1 - \frac{\epsilon^2}{\beta_2} - c_o \delta_1 - \frac{r}{  \delta_1}
    \end{equation}
    \begin{itemize}
        \item When $E_1-E_2 \leq  \frac{r}{2 \delta_1}$, then agent 2 will prefer to cooperate as $ u_2(t_o, 1, t_o, 1) \geq u_2(t_o, 1, t_o-\epsilon, 1)$. Thus $(t_o, 1, t_o, 1)$ is the unique SPE.
        \item Otherwise, $(T, 1, t_o, 1)$ is the unique SPE.
    \end{itemize} 
\end{enumerate}

\end{proof}

\end{document}